\documentclass[12pt,a4paper]{article}

\usepackage{amsmath,amsthm,amssymb,amscd,a4wide}

\usepackage{dsfont}
\usepackage{mathrsfs}
\usepackage{setspace}
\usepackage{hyperref}

\newcommand{\ud}{\mathrm{d}}

\newcommand{\cD}{{\mathcal D}}

\numberwithin{equation}{section}

\newtheorem{theorem}{Theorem}[section]
\newtheorem{lemma}[theorem]{Lemma}
\newtheorem{prop}[theorem]{Proposition}

\newtheorem{remark}[theorem]{Remark}

\theoremstyle{definition}


\numberwithin{equation}{section}

\begin{document}

\thispagestyle{empty}

\vspace*{1cm}

\begin{center}

{\LARGE\bf On bound electron pairs in a quantum wire } \\

\vspace*{2cm}

{\large Joachim Kerner \footnote{E-mail address: {\tt Joachim.Kerner@fernuni-hagen.de}} }%

\vspace*{5mm}

Department of Mathematics and Computer Science\\
FernUniversit\"{a}t in Hagen\\
58084 Hagen\\
Germany\\

\end{center}

\vfill

\begin{abstract} Based on the quantum two-body problem introduced in \cite{KM16Rep} we consider bound pairs of electrons moving on the positive half-line. The analysis is motivated by the ground-breaking work of Cooper \cite{CooperBoundElectron} who identified the pairing of electrons as a possible explanation for superconducting behaviour in metals. In this paper we are interested in the connection between the topologies of the underlying one- and two-particle configuration spaces and spectral properties of the Hamiltonian linked to a condensation of pairs. We derive explicit estimates for the energy gap and prove condensation for a gas of non-interacting pairs. Finally, we add some disorder to the system and prove destruction of the condensate.
\end{abstract}

\newpage

\section{Introduction}
This paper is motivated by the work of Cooper \cite{CooperBoundElectron} and its application shortly after in the explanation of (type-I) superconducting behaviour in metals, a joint work with Bardeen and Schrieffer \cite{BCSI}. 

Superconductivity refers to a vanishing electrical resistance of metals at low temperatures and was experimentally observed by K.~Onnes in 1911 \cite{Onnes1991}. This sudden change of resistance below some critical temperature was afterwards interpreted as a phase transition in a many-particle system similar to the one occurring in free Bose gases, namely, Bose-Einstein condensation (BEC) \cite{EinsteinBEC,MR04}. BEC, on the other hand, refers to the macroscopic occupation of a single-particle state in the so-called thermodynamic limit for temperatures lower than a critical one \cite{PO56}. In this way the wave function becomes a macroscopic quantity and an order parameter. However, since electrons are fermions which obey the Pauli exclusion principle, a single state can be occupied by at most two electrons and it hence remained unclear how one could arrive at such a phase transition to explain superconductivity. The solution, as suggested by Bardeen, Cooper and Schrieffer, lies in a pairing of electrons (Cooper pairs) and the condensation then occurs not for single electrons but for pairs of them. Most importantly, since the pairs are formed due to the existence of a net attractive interaction between the electrons (electron-phonon-electron interaction), the energy of the BCS-ground state is lower than the one of the non-interacting ground state. Furthermore, there exists a finite (volume-independent)  energy gap $\Delta > 0$ that separates the BCS-ground state from the first excited state which itself is obtained by breaking up a Cooper pair in the ground state. At this point it should be mentioned that the existence of an energy gap in the superconducting regime was arguably the most important experimentally confirmed fact in this context. The derivation of such a gap through a pairing of electrons was then the main objective in Cooper's paper \cite{CooperBoundElectron}.

In this paper we are also interested in investigating the effect of a pairing of electrons based on the quantum two-body Hamiltonian constructed in \cite{KM16Rep}. In this model the two electrons are restricted to have no distance larger than some given value $d > 0$ and hence the pair cannot be broken by adding energy to the system. However, the Hamiltonian exhibits interesting spectral properties which are directly linked to the topologies of the one- and two-particle configuration spaces. Most importantly, the topology of the two-particle configuration space leads to the existence of a spectral gap which subsequently is responsible for the condensation of pairs. Quite interestingly, we are also able to relate $d > 0$ to the excitation energy of the pair and consequently to a energy gap resembling $\Delta$. We then prove Bose-Einstein condensation in a gas of non-interacting pairs and finally, by adding some disorder to the system in form of a repulsion between the electrons, we prove the destruction of the condensate.

\section{The Hamiltonian for a single electron pair}
We consider two electrons on the half-line $\mathbb{R}_+=[0,\infty)$. Without loss of generality we also assume that both electrons have the same spin which eventually implies that the wave function has to be anti-symmetric. As in \cite{KM16Rep}, the pair shall be described by the Hamiltonian 
\begin{equation}\label{HamiltonianCooperPair}
H_{p}=-\frac{\hbar^2}{2m_e}\left(\frac{\partial^2}{\partial x^2}+\frac{\partial^2}{\partial y^2}\right)+v_{bin}(|x-y|)\ ,
\end{equation}
with binding potential $v_{bin}:\mathbb{R}_+ \rightarrow [0,\infty]$,
\begin{equation}\label{BindingPotential}
v_{bin}(x):= \begin{cases} 0 \quad \text{if}\quad 0\leq x \leq d\ , \\
\infty \quad \text{else}\ .
\end{cases}
\end{equation}
In this way, the parameter $d > 0$ describes the spatial extension of the pair. Furthermore, the chosen binding potential effectively leads to a reduction of the two-particle configuration space from $\mathbb{R}^2_+$ to the set (pencil-shaped domain)
\begin{equation}\label{PencilDomain}
\Omega:=\{(x,y) \in \mathbb{R}^2_+\ |\  |x-y|\leq d  \}\ ,
\end{equation}
on which $H_p$ then acts as the standard two-dimensional Laplacian. Also, the quadratic form associated with $H_{p}$ is defined on the Hilbert space 
\begin{equation}
L^2_a(\Omega):=\{\varphi \in L^2(\Omega)\ |\  \varphi(x,y)=-\varphi(y,x)\}
\end{equation}
and it is given by 
\begin{equation}
q[\varphi]=\frac{\hbar^2}{2m_e}\int_{\Omega}|\nabla \varphi|^2\ \ud x\,
\end{equation}
with domain $\cD_q=\{\varphi \in H^1(\Omega)\ |\  \varphi(x,y)=-\varphi(y,x) \ \text{and}\ \varphi|_{\partial \Omega_D}=0  \}$. 

Note that $\partial \Omega_D:=\{(x,y)\in \Omega\ |\ |x-y|=d  \}$. For later purposes we also introduce $\partial \Omega_{\sigma}:=\{(x,y)\in \Omega\ |\ x=0 \ \text{or}\ y=0 \}$. Also note that the anti-symmetry effectively leads to extra Dirichlet boundary conditions along the line $y=x$ for functions $\varphi \in \cD_q$.
\subsection{Spectral properties of $H_{p}$}
In this subsection we want to characterise the spectral properties of the pair Hamiltonian $H_p$. In particular, we want to prove the existence of an energy gap $\Delta=\Delta(d) > 0$ that separates the ground state from the first excited state. 
\begin{theorem}\label{TheoremEssentialSpectrum} For the essential spectrum one has 
\begin{equation}
\sigma_{ess}(H_{p})=\left[\frac{\hbar^2 \pi^2}{m_ed^2},\infty\right)\ .
\end{equation}
Furthermore, regarding the discrete part of the spectrum one has  
\begin{equation}
\sigma_{d}(H_{p})\neq \emptyset\ .
\end{equation}
\end{theorem}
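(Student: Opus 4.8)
The plan is to pass to centre-of-mass and relative coordinates $u=(x+y)/\sqrt2$, $v=(x-y)/\sqrt2$, in which the Laplacian is unchanged and $\Omega$ becomes $\{|v|\le W,\ u\ge|v|\}$ with $W:=d/\sqrt2$. The anti-symmetry $\varphi(x,y)=-\varphi(y,x)$ reads $\varphi(u,-v)=-\varphi(u,v)$, so it suffices to work on the reduced half-strip $\Sigma:=\{0<v<W,\ u>v\}$ carrying a Dirichlet condition at $v=0$ (the diagonal) and at $v=W$ (part of $\partial\Omega_D$), and a Neumann condition on the \emph{tip} $\{u=v\}$, which is exactly the segment $\{y=0\}\subset\partial\Omega_\sigma$. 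For $u>W$ the section is the full interval $(0,W)$, on which the transverse Dirichlet Laplacian has lowest eigenvalue $(\pi/W)^2$; multiplying by $\hbar^2/2m_e$ yields the number $\Lambda:=\hbar^2\pi^2/(m_ed^2)$, the anticipated continuum threshold.

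For the essential spectrum I would argue in two steps. First, $\inf\sigma_{ess}(H_p)\ge\Lambda$: by Persson's theorem this infimum equals $\lim_{R\to\infty}\inf\{q[\varphi]/\|\varphi\|^2:\varphi\in\cD_q,\ \supp\varphi\subset\{u>R\}\}$, and for such $\varphi$ (supported in the straight part of the strip) the one-dimensional Dirichlet Poincar\'e inequality in $v$ gives $q[\varphi]\ge\frac{\hbar^2}{2m_e}\int(\partial_v\varphi)^2\ge\Lambda\|\varphi\|^2$. Second, $[\Lambda,\infty)\subset\sigma_{ess}(H_p)$: for each $\lambda=\frac{\hbar^2}{2m_e}(k^2+(\pi/W)^2)\ge\Lambda$ I would build a singular Weyl sequence $\psi_n(u,v)=\ue^{\ui k u}\sin(\pi v/W)\,\chi_n(u)$, where $\chi_n$ is a smooth plateau cut-off supported in $[u_n,2u_n]$ with $u_n\to\infty$; these lie in $\cD_q$, satisfy $\|\psi_n\|\gtrsim 1$, converge weakly to $0$, and obey $\|(H_p-\lambda)\psi_n\|\to0$ since the cut-off error is $O(u_n^{-1})$ and the tip is avoided. (Equivalently one may use Dirichlet--Neumann bracketing along a section $u=L$, splitting $\Sigma$ into a bounded piece with compact resolvent and a half-strip of explicit spectrum $[\Lambda,\infty)$.)

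For $\sigma_d(H_p)\ne\emptyset$ it is enough, by the min-max principle, to produce a single $\varphi\in\cD_q$ with $q[\varphi]<\Lambda\|\varphi\|^2$. I would take $\varphi=\varphi_0+t\,w$ with $\varphi_0(u,v)=\sin(\pi v/W)\,h(u)$, where $h$ equals $1$ on a long plateau $[0,M]$ and then decays slowly to $0$ on $[M,M+1/\eps]$, and $w\in\cD_q$ a fixed correction supported near the tip and vanishing on the Dirichlet part of the boundary. A direct computation gives $q[\varphi_0]-\Lambda\|\varphi_0\|^2=O(\eps)$ (the only cost is the slow longitudinal decay, the pure-product tip contribution cancelling exactly for constant $h$). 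Expanding, $q[\varphi]-\Lambda\|\varphi\|^2=O(\eps)+2t\,b+t^2 c$ with $c=q[w]-\Lambda\|w\|^2$ and $b=\frac{\hbar^2}{2m_e}\big(\int\nabla\varphi_0\cdot\nabla w-(\pi/W)^2\int\varphi_0\,w\big)$; since $(-\Delta-(\pi/W)^2)\varphi_0=0$ where $h\equiv1$, integration by parts collapses $b$ to the boundary integral over the Neumann tip, $b=\frac{\hbar^2}{2m_e}\int_{\{u=v\}}(\partial_n\varphi_0)\,w\,\dl=\frac{\hbar^2}{2m_e}\int_{\{u=v\}}\tfrac{1}{\sqrt2}\tfrac{\pi}{W}\cos(\tfrac{\pi v}{W})\,w\,\dl$. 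This is nonzero for a suitable $w$ (e.g.\ $w\ge0$ concentrated where $u=v$ is small, so $\cos(\pi v/W)>0$), which is precisely where the Neumann condition pays off: the transverse ground mode has a nonvanishing normal derivative on the tip.

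The decisive step is turning $b\ne0$ into strict negativity. I would first fix $w$ and then a small $t$ of sign opposite to $b$, making $2tb+t^2c=-\kappa<0$ a fixed negative constant independent of $M,\eps$ (legitimate because $h\equiv1$ on $\supp w$); afterwards I would send $\eps\to0$ so that the $O(\eps)$ longitudinal term falls below $\kappa$, yielding $q[\varphi]<\Lambda\|\varphi\|^2$. The main obstacle I anticipate is the rigorous bookkeeping around this non-smooth domain: justifying the integration by parts and the boundary term at the corners where the tip meets the diagonal and the Dirichlet side, and controlling the error terms uniformly while the two parameters $t$ and $\eps$ are sent to their limits in the correct order. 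Given the threshold identification of the first step, the strict inequality then forces an eigenvalue below $\Lambda$, so $\sigma_d(H_p)\ne\emptyset$.
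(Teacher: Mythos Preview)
Your argument is correct. For the essential spectrum you do essentially what the paper does (Weyl sequences in the straight part of the strip, with the transverse Dirichlet interval having half the width because of the anti-symmetry), only you phrase it in the rotated coordinates $(u,v)$ and invoke Persson's theorem for the lower bound; this is a cosmetic difference.

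For $\sigma_d(H_p)\neq\emptyset$ the routes genuinely diverge. The paper does not build a trial function at all: it restricts to $\tilde\Omega=\{y\ge x\}\cap\Omega$ (your half-strip $\Sigma$, up to rotation), then reflects across the Neumann segment $\{x=0\}$ so that this boundary becomes interior and one is left with an L-shaped (bent) strip $\hat\Omega$ carrying Dirichlet conditions on \emph{all} of its boundary; the existence of a bound state is then imported wholesale from \cite{ExnerL}, and symmetry of the ground state of $\hat\Omega$ ensures it descends to the Neumann problem on $\tilde\Omega$. Your approach instead runs the variational argument directly on $\Sigma$: the product test function $\sin(\pi v/W)h(u)$ gives $q[\varphi_0]-\Lambda\|\varphi_0\|^2=O(\eps)$ (with the tip contribution indeed cancelling exactly, since $\int_0^W\!\int_0^u\cos(2\pi v/W)\,dv\,du=0$), and the nonvanishing normal derivative of the transverse mode on the Neumann tip produces the linear term $b\neq 0$ that drives the energy strictly below $\Lambda$. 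What you gain is self-containment: you never leave the original domain, and the mechanism (the Neumann end playing the role of the bend) is made explicit rather than hidden inside the citation. What the paper gains is brevity and a clean reduction to a purely Dirichlet waveguide problem for which the result is already on the shelf. The corner bookkeeping you flag is harmless here, since both $\varphi_0$ and $w$ vanish on the Dirichlet sides meeting the tip and $\varphi_0$ is smooth on $\supp w$; Green's formula on the Lipschitz triangle goes through without residual corner terms.
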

\begin{proof} The first part of the statement follows adapting the proof of [Theorem~2,\cite{KM16Rep}] with only one small difference: Namely, due to anti-symmetry one has additional Dirichlet boundary conditions along the axis $y=x$. Hence, the appropriate Weyl sequence again consists of rectangles but with half of the width. This then leads to an extra factor of four and hence to the given value for the bottom of the essential spectrum.
	
The second part of the statement, on the other hand, follows using similar methods as in the proof of [Theorem~3,\cite{KM16Rep}]: In a first step one restricts $\Omega$ to the domain $\tilde{\Omega}:=\{(x,y) \in \mathbb{R}^2_+\ |\  |x-y|\leq d\ , \ y \geq x  \}$. Then one reflects $\tilde{\Omega}$ across the axis $x=0$ to obtain the (l-shaped) domain $\hat{\Omega}$. Since one then has Dirichlet boundary conditions along all boundary segments of $\hat{\Omega}$, the statement follows directly from \cite{ExnerL}.
\end{proof}
\begin{remark} The existence of a finite spectral gap that separates the ground state from the essential spectrum is a highly non-trivial issue which is strongly related to the geometry and hence topology of the two-particle configuration space, i.e., the pencil-shaped domain $\Omega$ (see [Remark~4,\cite{KM16Rep}] for a further discussion).
\end{remark}
Now, it is actually possible to establish an estimate for the ground state energy $E_0=\inf \sigma(H_{p})$ of $H_{p}$. 
\begin{theorem}[Ground state energy of an electron pair]\label{GroundStateEnergyCP} Let $E_0$ denote the lowest eigenvalue of $H_{p}$. Then one has 
	\begin{equation}\label{EquationGroundStateEnergyCooperPair}
	0,25\cdot \frac{\hbar^2\pi^2}{m_ed^2} \leq E_0 \leq 0,93 \cdot \frac{\hbar^2 \pi^2}{m_ed^2}\ .
	\end{equation}
	\end{theorem}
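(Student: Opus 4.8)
The plan is to work with the quadratic form and the min--max principle, writing $E_0=\tfrac{\hbar^2}{2m_e}\mu_0$, where $\mu_0$ is the lowest eigenvalue of $-\Delta$ on $\Omega$ subject to Dirichlet conditions on $\partial\Omega_D$ and along the diagonal $y=x$ (forced by anti-symmetry) together with Neumann conditions on $\partial\Omega_\sigma$. First I would pass to the rotated coordinates $\xi=(x+y)/\sqrt2$, $\eta=(y-x)/\sqrt2$, under which the Laplacian is invariant and $\Omega$ becomes $\{\xi\ge|\eta|,\ |\eta|\le a\}$ with $a:=d/\sqrt2$. Anti-symmetry turns into $\varphi(\xi,\eta)=-\varphi(\xi,-\eta)$, so it suffices to consider the half $\eta\ge0$ with a Dirichlet condition at $\eta=0$; the reduced domain is $\tilde\Omega:=\{0\le\eta\le a,\ \xi\ge\eta\}$, carrying Dirichlet conditions on $\eta=0$ and $\eta=a$ and a Neumann condition on the slanted edge $\xi=\eta$ (the image of $x=0$). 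Note that for $\xi\ge a$ the $\eta$-cross-section is the full interval $[0,a]$ with Dirichlet ends, whereas for $0\le\xi\le a$ it is $[0,\xi]$ with a Dirichlet end at $\eta=0$ and a Neumann end at $\eta=\xi$.

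For the lower bound I would discard the non-negative term $\int|\partial_\xi\varphi|^2$ and estimate the transverse part slice by slice. For a.e.\ fixed $\xi$ the function $\eta\mapsto\varphi(\xi,\eta)$ vanishes at $\eta=0$, so the one-dimensional inequalities
\[
\int_0^a|\partial_\eta\varphi|^2\,\ud\eta\ge\Big(\frac{\pi}{a}\Big)^2\int_0^a|\varphi|^2\,\ud\eta\quad(\xi\ge a),\qquad
\int_0^\xi|\partial_\eta\varphi|^2\,\ud\eta\ge\Big(\frac{\pi}{2\xi}\Big)^2\int_0^\xi|\varphi|^2\,\ud\eta\quad(\xi\le a)
\]
hold (Dirichlet--Dirichlet, resp.\ Dirichlet--Neumann). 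Since $(\pi/a)^2\ge(\pi/2a)^2$ and $(\pi/2\xi)^2\ge(\pi/2a)^2$ for $\xi\le a$, both are bounded below by $(\pi/2a)^2=\pi^2/(2d^2)$ uniformly in $\xi$. Integrating in $\xi$ gives $\int_{\tilde\Omega}|\nabla\varphi|^2\ge\frac{\pi^2}{2d^2}\int_{\tilde\Omega}|\varphi|^2$, hence $\mu_0\ge\pi^2/(2d^2)$ and $E_0\ge 0{,}25\cdot\frac{\hbar^2\pi^2}{m_ed^2}$.

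For the upper bound I would use Rayleigh--Ritz with an explicit trial function. It is convenient to reflect $\tilde\Omega$ across $x=0$ (as in the proof of Theorem~\ref{TheoremEssentialSpectrum}): the Neumann edge becomes the interior symmetry axis of a strip of width $a=d/\sqrt2$ bent by a right angle, and $\mu_0$ equals the ground-state energy of this L-shaped waveguide, whose continuum threshold is $(\pi/a)^2=2\pi^2/d^2$ in agreement with Theorem~\ref{TheoremEssentialSpectrum}. I would then choose a trial function built from the transverse Dirichlet mode $\sin(\pi\eta/a)$ in the two arms, decaying exponentially away from the corner and glued to a corner profile that exploits the weaker (Dirichlet--Neumann) confinement there; reducing the Rayleigh quotient to explicit one-dimensional integrals and optimizing the remaining parameters should yield a value $\le 0{,}93\cdot\frac{\hbar^2\pi^2}{m_ed^2}$, consistent with the known numerical ground-state energy $\approx 0{,}929\,(\pi/a)^2$ of the right-angle bent strip.

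The main obstacle is the upper bound. Because the target constant $0{,}93$ is essentially the exact ground-state energy of the L-shaped waveguide, a crude ansatz will not even drop below the threshold $2\pi^2/d^2$, let alone reach $0{,}93$; the energy gain comes entirely from the corner, so the trial function must resolve the corner region accurately, and the resulting integrals, including the coupling between the corner and the arms, must be handled with care (and optimized, in general numerically). By contrast the lower bound is elementary, relying only on the transverse one-dimensional spectral gap.
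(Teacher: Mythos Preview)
Your lower bound is correct and takes a somewhat different route from the paper. The paper localises to the corner region $\Omega_{\mathrm{triangle}}=\{0\le x,y\le d\}$: since on the remaining half-strip the Rayleigh quotient of any $\varphi\in\cD_q$ is at least the essential threshold $\hbar^2\pi^2/(m_ed^2)$, the Rayleigh quotient of the ground state restricted to $\Omega_{\mathrm{triangle}}$ must be $\le E_0$ (this is what inequality~(15) of the companion paper supplies); one then bounds it from below by the explicit lowest eigenvalue of that triangle with Dirichlet on $y=x$ and Neumann on the remaining sides. Your transverse slicing in the $\eta$-direction bypasses this two-step localisation and needs only the one-dimensional Dirichlet and Dirichlet--Neumann Poincar\'e constants on each slice; both routes deliver the same constant $0{,}25$, and yours is arguably the more elementary.

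For the upper bound you and the paper agree on the reduction: reflect $\tilde\Omega$ across $x=0$ to obtain the L-shaped Dirichlet strip $\hat\Omega$ of width $d/\sqrt2$, so that $E_0$ equals $\hbar^2/(2m_e)$ times its ground-state energy. The paper then simply quotes the numerical bound from the Exner et~al.\ reference rather than constructing a trial function. So the obstacle you flag is real if one insists on a self-contained argument, but in the paper it is handled by citation, not by a new variational computation; your outline would reproduce that cited result rather than replace it.
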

	\begin{proof} We first note that the upper bound follows directly using the results of \cite{ExnerL}, see also the proof of [Theorem~3,\cite{KM16Rep}].
		
		In order to obtain the lower bound, consider any normalised function $\varphi \in \cD_q$ such that $q[\varphi] < \frac{\hbar^2 \pi^2}{m_ed^2}$. Then consider the restriction of $\varphi$ onto the triangle $\Omega_{triangle}:=\{(x,y)\in \Omega\ |\ 0 \leq x,y\leq d \}$. This restriction as well as its gradient cannot vanish identically (see the proof of [Theorem~3,\cite{KM16Rep}]). Then, due to the Dirichlet boundary conditions along the axis $y=x$ we obtain (note that we have a triangle with Dirichlet and Neumann boundary conditions and hence the ground state eigenvalue can be determined explicitly), also using inequality~(15) of \cite{KM16Rep},
		\begin{equation}
		\frac{\hbar^2\pi^2}{4m_ed^2} \leq \frac{\hbar^2}{2m_e}\frac{\int_{\Omega_{triangle}}|\nabla \varphi|^2\ \ud x}{\int_{\Omega_{triangle}}|\varphi|^2\ \ud x} \leq E_0\ .
		\end{equation}

	\end{proof}
	In the next result we prove that there actually exists exactly one bound state, i.e., one eigenstate below the bottom of the essential spectrum.
	\begin{theorem}\label{ExistenceOneEigenvalue} One has $\sigma_d(H_p)=\{E_0 \}$. Furthermore, the multiplicity of the ground state is one.
	\end{theorem}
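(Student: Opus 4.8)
The plan is to establish two things separately: that the ground state is simple, and that no eigenvalue other than $E_0$ lies below the bottom $\frac{\hbar^2\pi^2}{m_ed^2}$ of $\sigma_{ess}(H_p)$. I would work on the fundamental domain $\tilde{\Omega}=\{(x,y)\in\rz^2_+\ |\ |x-y|\le d,\ y\ge x\}$, on which the anti-symmetric realisation of $H_p$ acts as the Laplacian with Dirichlet conditions along $y=x$ (from anti-symmetry) and along the edge $y=x+d\subset\partial\Omega_D$, and with a Neumann condition along the segment $x=0\subset\partial\Omega_\sigma$. Reflecting $\tilde{\Omega}$ across $x=0$ produces the L-shaped domain $\hat{\Omega}$ already used in the proof of Theorem~\ref{TheoremEssentialSpectrum}, which carries the reflection symmetry $R:(x,y)\mapsto(-x,y)$ exchanging its two straight arms. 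Under $R$ the even (symmetric) functions are exactly those satisfying the Neumann condition at $x=0$, so the eigenvalues of our operator below the threshold are precisely the $R$-even Dirichlet eigenvalues of $\hat{\Omega}$ lying below it.

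For the simplicity statement I would invoke a Perron--Frobenius argument. On the connected open set $\tilde{\Omega}$ the heat semigroup generated by $-H_p$ with these mixed Dirichlet--Neumann conditions is positivity preserving and, by connectedness, positivity improving; hence the lowest eigenvalue is non-degenerate and its eigenfunction may be chosen strictly positive in the interior. This already yields that the multiplicity of the ground state is one.

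It remains to exclude a second eigenvalue below $\frac{\hbar^2\pi^2}{m_ed^2}$, and this is the step I expect to be the main obstacle. Here I would use the bound-state count for the right-angle Dirichlet waveguide $\hat{\Omega}$: its essential spectrum also starts at $\frac{\hbar^2\pi^2}{m_ed^2}$, and the analysis of \cite{ExnerL} yields exactly one eigenvalue below this threshold. Applying the same positivity argument on the connected domain $\hat{\Omega}$, its ground state is simple and positive, hence invariant under $R$, i.e.\ even. Consequently this single bound state is $R$-even and, by the correspondence above, accounts for exactly one eigenvalue of our operator, namely $E_0$. Together with $\sigma_{ess}(H_p)=[\frac{\hbar^2\pi^2}{m_ed^2},\infty)$ this gives $\sigma_d(H_p)=\{E_0\}$.

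If one prefers a self-contained count rather than quoting the sharp waveguide result, the natural substitute is a Neumann-bracketing estimate combined with the Courant nodal-domain principle: a second eigenfunction with eigenvalue below the threshold would possess at least two nodal domains, on each of which it restricts to a Dirichlet ground state of energy below $\frac{\hbar^2\pi^2}{m_ed^2}$; one then shows, using inequality~(15) of \cite{KM16Rep} and the fact that only the single corner region of $\hat{\Omega}$ can sustain a transverse mode of energy below the threshold, that two such disjoint nodal domains cannot coexist. Quantifying the corner contribution is the delicate point, and it is precisely what the waveguide estimate of \cite{ExnerL} packages cleanly.
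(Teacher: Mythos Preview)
Your proposal is correct and reaches the same conclusion, but the route to the eigenvalue count differs from the paper's. You reduce to the Dirichlet L-shaped waveguide $\hat{\Omega}$ and then \emph{quote} the sharp result of \cite{ExnerL} that there is exactly one bound state, using positivity to identify it as $R$-even. The paper instead gives a self-contained bracketing argument: it embeds $\hat{\Omega}$ (with $b=d/\sqrt{2}$) into the cross-shaped domain $\Omega_b$, and compares the Dirichlet Laplacian on $\Omega_b$ with the direct sum of the Neumann Laplacian on the central square $\Omega_s=[0,b]^2$ and the Laplacian on the four arms. Since the Neumann square has only the eigenvalue $0$ below $\frac{\hbar^2\pi^2}{2m_eb^2}$ and the arms contribute nothing below that threshold, the cross (and hence $\hat{\Omega}$) carries at most one isolated eigenvalue; reflection symmetry then transfers this to $H_p$.

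Your approach is shorter but leans on the precise content of \cite{ExnerL}; the paper's cross-domain bracketing is more elementary and independent of any sharp waveguide estimate. Your alternative nodal-domain sketch is not the argument the paper uses and, as you note, would need the corner localisation to be made quantitative---the cross-domain Neumann bracketing does exactly that cleanly. The simplicity part is the same in spirit: you phrase it via the positivity-improving semigroup, the paper cites [Theorem~11.8,\cite{LiebLoss}].
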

	\begin{proof}
		We first note that uniqueness of the ground state (up to a phase) follows by standard arguments [Theorem~11.8,\cite{LiebLoss}].
		
		To prove that there exists no other eigenvalue below $\inf \sigma_{ess}(H_p)$  we use a operator-bracketing argument \cite{BEH08}: Consider the two-dimensional Laplacian $-\frac{\hbar^2}{2m_e}\Delta$ on the cross-shaped domain 
	\begin{equation}\begin{split}
	\Omega_{b}:=\{(x,y)\in \mathbb{R}^2|\ &x\in (-\infty,+\infty) \ \text{and}\ 0  \leq y \leq b  \} \\
	&\cup \{(x,y)\in \mathbb{R}^2|\ y\in (-\infty,+\infty) \ \text{and}\ 0  \leq x \leq b  \}\ ,
	\end{split}
	\end{equation}
	subjected to Dirichlet boundary conditions. Using similar arguments as in the proof of Theorem~\ref{TheoremEssentialSpectrum}, one readily sees that the essential spectrum of this Laplacian starts at $\frac{\hbar^2\pi^2}{2m_eb^2}$. Now consider the comparison operator $-\frac{\hbar^2}{2m_e}\Delta|_{\Omega_{s}} \oplus -\frac{\hbar^2}{2m_e}\Delta|_{\Omega_b\setminus \Omega_{s}}$ where $\Omega_{s}:=[0,b] \times [0,b]$. Note that $-\frac{\hbar^2}{2m_e}\Delta|_{\Omega_{s}}$ is subjected to Neumann boundary conditions and $-\frac{\hbar^2}{2m_e}\Delta_2$ on $\Omega_{b}\setminus\Omega_{s}$ fulfils Neumann boundary conditions along the boundary segments touching the square and Dirichlet boundary conditions elsewheren. Then
	\begin{equation}
	-\frac{\hbar^2}{2m_e}\Delta|_{\Omega_{s}} \oplus -\frac{\hbar^2}{2m_e}\Delta|_{\Omega_b\setminus \Omega_{s}} \leq -\frac{\hbar^2}{2m_e}\Delta|_{\Omega_{b}}
	\end{equation}
	in the sense of operators. Now, the important thing to note is that $-\frac{\hbar^2}{2m_e}\Delta|_{\Omega_{s}}$ has purely discrete spectrum with eigenvalues $\{0,\frac{\hbar^2\pi^2}{2m_eb^2},...\}$ and hence there exists only one eigenvalue smaller than $\frac{\hbar^2\pi^2}{2m_eb^2}$. Since the spectrum of $-\frac{\hbar^2}{2m_e}\Delta|_{\Omega_b\setminus \Omega_{s}}$ also starts at $\frac{\hbar^2\pi^2}{2m_eb^2}$, we conclude that $-\frac{\hbar^2}{2m_e}\Delta|_{\Omega_{b}}$ has at most one isolated eigenvalue (it has exactly one).
	
	To obtain the statement we consider the operator  $-\frac{\hbar^2}{2m_e}\Delta$ on the domain $\hat{\Omega}$ as defined in the proof of Theorem~\ref{TheoremEssentialSpectrum}. One has 
		\begin{equation}
		-\frac{\hbar^2}{2m_e}\Delta|_{\Omega_b}\leq -\frac{\hbar^2}{2m_e}\Delta |_{\hat{\Omega}} \oplus 0|_{\Omega_b\setminus \hat{\Omega}}\ ,
		\end{equation}
	identifying $b:=d/\sqrt{2}$. Finally, by the same reasoning as above we conclude that $-\frac{\hbar^2}{2m_e}\Delta |_{\hat{\Omega}} \oplus 0|_{\Omega_b\setminus \hat{\Omega}}$ has at most one isolated eigenvalue below $\frac{\hbar^2\pi^2}{m_ed^2}$ and so has our original operator by the reflection symmetry of the domain $\hat{\Omega}$.
	\end{proof}

Based on the previous theorems one now has a clear picture of the excitation spectrum of the pair of electrons in the considered model. As already mentioned in the introduction, it is not possible to break up the pair but it is nevertheless possible to excite the pair by adding kinetic energy to it. Hence, one obtains for the energy gap the relation
\begin{equation}\label{EnergyGap}
\Delta=\Delta(d) \sim \frac{\hbar^2 \pi^2}{m_e d^2}\ ,
\end{equation}
when defining $\Delta(d):=\frac{\hbar^2 \pi^2}{m_ed^2}-E_0$. According to \cite{MR04} the energy gap in superconducting metals is of order $10^{-3}$eV. Plugging this into \eqref{EnergyGap} one concludes that $d$ has to be of order $10^{-6}$m which is in agreement with the value for the extension of Cooper pairs as derived in \cite{CooperBoundElectron}.

\section{On the condensation of bound electron pairs}
\label{Sec2}
As already mentioned in the introduction, it is the condensation of the (Cooper) pairs into a single state which constitutes the superconducting phase with its characteristic coherent many-particle behaviour. It is therefore the aim of this section to discuss Bose-Einstein condensation (BEC) for a gas of non-interacting electron pairs, each described by the Hamiltonian \eqref{HamiltonianCooperPair}.

As customary in statistical mechanics, BEC will be investigated in the grand-canonical ensemble using a suitable thermodynamic limit~\cite{RuelleSM}. For this one first reduces the one-pair configuration space to $\Lambda_{L}:=[0,L]$ and, accordingly, the Hilbert space of a pair is then given by $L^2(\Omega_L)$ with 
\begin{equation}\label{PencilDomainTL}
\Omega_L:=\{(x,y) \in \mathbb{R}^2_+\ |\  |x-y|\leq d \ \text{and}\ 0\leq x,y\leq L  \}\ .
\end{equation}
To arrive at a self-adjoint realisation of the pair Hamiltonian~\eqref{HamiltonianCooperPair} on $L^2(\Omega_L)$ one has to introduce extra boundary conditions along the segments of $\Omega_L$ where $x=L$ or $y=L$. For convenience we choose Dirichlet boundary conditions and denote the corresponding self-adjoint version of~\eqref{HamiltonianCooperPair} as $H^{L}_p$. The operator $H^{L}_p$ has purely discrete spectrum and we write its eigenvalues as $\{E_L(n)\}_{n \in \mathbb{N}_0}$. Note that the quadratic form associated with $H^{L}_p$ is given by 
\begin{equation}
q_L[\varphi]=\frac{\hbar^2}{2m_e}\int_{\Omega_L}|\nabla \varphi|^2\ \ud x\,
\end{equation}
with domain $\cD_{q_L}=\{\varphi \in H^1(\Omega_L)\ |\  \varphi(x,y)=-\varphi(y,x)\ \text{and}\ \varphi|_{\partial \Omega^L_D}=0  \}$ where $\partial \Omega^L_D:=\{(x,y) \in \Omega_L|\ |x-y|=d \ \text{or}\ x=L \ \text{or}\ y=L \}$.

The thermodynamic limit is then realised as the limit $L \rightarrow \infty$ such that 
\begin{equation}\label{EquationThermodynamicLimit}
\frac{1}{L}\sum_{n=0}\frac{1}{e^{\beta\left(E_L(n)-\mu_L\right)}}=\rho
\end{equation}
holds for all values of $L >0$ with $\rho > 0$ being the density of pairs, $\beta=1/T$ the inverse temperature and $\mu_L \leq E_L(0)$ the sequence of chemical potentials~\cite{LandauWildeBEC,RuelleSM}. 
\begin{lemma}\label{LemmaConvergenceGroundstate} One has $E_L(0) \rightarrow E_0$ as $L \rightarrow \infty$.
\end{lemma}
\begin{proof} By the min-max principle we have $E_0 \leq E_L(0)$ since the eigenfunction corresponding to $E_L(0)$ can be extended by zero onto $\Omega$. On the other hand, using a suitable sequence of test functions $\tau_L \subset C^{\infty}(\Omega)$ one has that $\tau_L \varphi_0 \in \cD_{q_L}$ as well as $\tau_L \varphi_0 \rightarrow \varphi_0$ in $H^1(\Omega)$, $\varphi_0$ being the eigenfunction associated with $E_0$. Hence, one has the inequality $E_0 \leq E_L(0) \leq E_0 + \varepsilon_L$ with $\varepsilon_L \rightarrow 0$ as $L \rightarrow \infty$ thus proving the statement. 
\end{proof}
Using a bracketing argument similar to the one used in the proof of Theorem~\ref{ExistenceOneEigenvalue} one arrives at the following statement. 
\begin{prop} One has $\frac{\hbar^2 \pi^2}{m_ed^2} \leq E_L(n)$ for all $n\geq 1$ and all $L > d$.
	\end{prop}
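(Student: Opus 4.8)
The plan is to obtain the bound purely from the spectral picture already established for the infinite-volume operator $H_p$, combined with a Dirichlet-bracketing (domain-monotonicity) comparison, rather than by any fresh estimate. The two inputs I would use are Theorem~\ref{TheoremEssentialSpectrum}, giving $\inf\sigma_{ess}(H_p)=\frac{\hbar^2\pi^2}{m_ed^2}$, and Theorem~\ref{ExistenceOneEigenvalue}, which tells us that $H_p$ possesses \emph{exactly one} eigenvalue below this threshold, namely $E_0$, of multiplicity one. Writing $\mu_n(\cdot)$ for the $n$-th min-max value (indexed from $n=0$), these two facts say precisely that $\mu_0(H_p)=E_0<\frac{\hbar^2\pi^2}{m_ed^2}$ while $\mu_n(H_p)=\frac{\hbar^2\pi^2}{m_ed^2}$ for every $n\geq1$, since there is no spectrum strictly between $E_0$ and the bottom of the essential spectrum.

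First I would compare the quadratic forms by extension by zero. Since $\Omega_L\subset\Omega$ and $\Omega_L$ is invariant under $x\leftrightarrow y$, any $\varphi\in\cD_{q_L}$ extends to a function $\tilde\varphi$ on $\Omega$ by setting it to zero on $\Omega\setminus\Omega_L$. Because $\varphi$ vanishes on the segments $x=L$ and $y=L$ — which are exactly the parts of $\partial\Omega_L$ interior to $\Omega$ — the extension lies in $H^1(\Omega)$, remains anti-symmetric, and still vanishes on $\partial\Omega_D$, so $\tilde\varphi\in\cD_q$. As $q[\tilde\varphi]=q_L[\varphi]$ and $\|\tilde\varphi\|_{L^2(\Omega)}=\|\varphi\|_{L^2(\Omega_L)}$, this extension embeds $\cD_{q_L}$ form-isometrically into $\cD_q$. (The hypothesis $L>d$ ensures that $\Omega_L$ already realises the full pencil geometry, so that $\frac{\hbar^2\pi^2}{m_ed^2}$ is indeed the relevant comparison threshold.)

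Next I would feed this embedding into the min-max principle. Because the infimum defining $\mu_n(H^L_p)$ runs only over $(n+1)$-dimensional subspaces of the smaller space $\cD_{q_L}$, and the Rayleigh quotient is preserved under the embedding, one obtains $\mu_n(H^L_p)\geq\mu_n(H_p)$ for all $n$. Since $H^L_p$ has purely discrete spectrum, $E_L(n)=\mu_n(H^L_p)$, and therefore $E_L(n)\geq\mu_n(H_p)=\frac{\hbar^2\pi^2}{m_ed^2}$ for every $n\geq1$, which is the assertion.

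There is no hard estimate here; the whole weight of the argument rests on Theorem~\ref{ExistenceOneEigenvalue}, i.e. on $H_p$ having a \emph{single} bound state — were there a second eigenvalue below the essential spectrum, $\mu_n(H_p)$ would dip below $\frac{\hbar^2\pi^2}{m_ed^2}$ for some $n\geq1$ and the bound would be lost. The only point requiring mild care is checking that the extension by zero respects anti-symmetry and the Dirichlet conditions along $\partial\Omega_D$, which is immediate from the symmetry of $\Omega_L$. If one prefers to remain within the Neumann-bracketing framework of Theorem~\ref{ExistenceOneEigenvalue}, one may instead reflect the anti-symmetric problem across the Neumann axis $x=0$ onto the Dirichlet Laplacian on the finite bent strip inside $\hat\Omega$, of transverse width $d/\sqrt2$, and inherit the count ``at most one eigenvalue below $\frac{\hbar^2\pi^2}{m_ed^2}$'' by domain monotonicity from the infinite bent strip $\hat\Omega$, the finite straight channels contributing nothing below threshold since their transverse Dirichlet mode already sits at $\frac{\hbar^2\pi^2}{m_ed^2}$.
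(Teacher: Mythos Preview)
Your argument is correct. The domain-monotonicity step (extension by zero from $\cD_{q_L}$ into $\cD_q$) is sound because the only boundary segments of $\Omega_L$ interior to $\Omega$ are the Dirichlet segments $\{x=L\}$ and $\{y=L\}$, while the Neumann parts $\partial\Omega_\sigma$ coincide for $\Omega_L$ and $\Omega$ once $L>d$; anti-symmetry is preserved since $\Omega_L$ is invariant under $x\leftrightarrow y$. The min-max comparison $\mu_n(H^L_p)\geq\mu_n(H_p)$ then follows, and the identification $\mu_n(H_p)=\inf\sigma_{ess}(H_p)=\frac{\hbar^2\pi^2}{m_ed^2}$ for $n\geq1$ is exactly what Theorems~\ref{TheoremEssentialSpectrum} and~\ref{ExistenceOneEigenvalue} together assert.

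The paper does not write out a proof but indicates that one should redo a Neumann bracketing directly on $\Omega_L$, in the spirit of Theorem~\ref{ExistenceOneEigenvalue}: cut $\Omega_L$ into the corner region and the finite straight channel, observe that the channel contributes nothing below $\frac{\hbar^2\pi^2}{m_ed^2}$, and that the corner piece carries at most one eigenvalue below this threshold. Your primary route is different in that it transfers the eigenvalue count from the infinite-volume operator $H_p$ via domain monotonicity rather than repeating the bracketing at finite $L$; this is more economical, since the hard work has already been done once. The alternative you sketch at the end is essentially the paper's intended argument. One minor remark: your appeal to the hypothesis $L>d$ is not actually needed for the domain-monotonicity step (extension by zero embeds $\cD_{q_L}$ into $\cD_q$ for any $L>0$); it is, however, needed for the direct bracketing route, which is presumably why the paper states it.
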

Comparing with~\eqref{EquationThermodynamicLimit} we see that the density of pairs in the excited states with eigenvalues $\{E_L(n)\}_{n \geq 1}$ is given by 
\begin{equation}
\rho_{ex}(\beta,\mu_L,L)=\frac{1}{L}\sum_{n=1}^{\infty}\frac{1}{e^{\beta(E_L(n)-\mu_L)}-1}\ .
\end{equation}
Using a bracketing argument similar to the one used in the proof of Theorem~\ref{ExistenceOneEigenvalue} one can show that (setting the chemical potential constant)
\begin{equation}\label{ParticleDensityExcitedStates}
\rho^{ex}_{\infty}(\beta,\mu):=\lim_{L \rightarrow \infty}\rho_{ex}(\beta,\mu,L)=\frac{\sqrt{4m_e}}{\hbar \pi}\sum_{n=1}^{\infty}\int_{0}^{\infty}\frac{1}{e^{\beta\frac{\hbar^2\pi^2}{m_ed^2}n^2}e^{\beta (x^2-\mu)}-1}\ud x\ .
\end{equation}
Formally, \eqref{ParticleDensityExcitedStates} can be derived by taking into account that the (contributing) eigenvalues $\{E_L(n)\}_{n\geq 1}$ are approximately given by $\left\{E_L(k,l) \sim \frac{\hbar^2}{2m_e}\left(\frac{2\pi^2}{d^2}k^2+\frac{\pi^2}{8L^2}(2l+1)^2\right)\right\}$, where we have double-indexed them with $k,l \in \mathbb{N}$, since those are the eigenvalues of a rectangle $[0,\sqrt{2}L] \times [0,d/\sqrt{2}]$ with Neumann boundary conditions at the left segment and Dirichlet boundary conditions elsewhere (note here the structure of $\Omega_L$). In the limit $L \rightarrow \infty$, one of the sums is a Riemann sum consequently turning into an integral. 

Now, we say that a condensation of (electron) pairs occurs in the thermodynamic limit if there exists an eigenstate $\varphi_L(n)$ of $H^L_p$, being the one associated with the eigenvalue $E_L(n)$, which is macroscopically occupied in the thermodynamic limit, i.e., if
\begin{equation}
\limsup_{L \rightarrow \infty}\frac{1}{e^{\beta\left(E_L(n)-\mu_L\right)}-1} > 0 \  .
\end{equation}
\begin{theorem}[Condensation of electron pairs]\label{CondensationPairs} For given inverse temperature $\beta \in (0,\infty)$ there exists a critical pair density $\rho_{crit}(\beta) > 0$ such that there is a condensation of pairs for all densities $\rho > \rho_{crit}(\beta)$ in the thermodynamic limit.
\end{theorem}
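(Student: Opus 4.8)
The plan is to separate the total pair density into its ground-state and excited-state parts and to exploit the finite critical value of the excited-state density to force macroscopic occupation of the ground state. Concretely, for each $L$ the defining equation~\eqref{EquationThermodynamicLimit} can be written as
\begin{equation}
\rho=\frac{1}{L}\,\frac{1}{e^{\beta(E_L(0)-\mu_L)}-1}+\rho_{ex}(\beta,\mu_L,L)\ ,
\end{equation}
so that the ground-state contribution equals $\rho-\rho_{ex}(\beta,\mu_L,L)$. The key input is that the excited states are bounded below by $\tfrac{\hbar^2\pi^2}{m_ed^2}$ (the preceding Proposition), whence $\rho_{ex}$ is a monotone function of $\mu_L$ that is maximised at $\mu_L=E_L(0)$.

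First I would show that the limiting excited-state density evaluated at the top of the admissible range is finite. Using Lemma~\ref{LemmaConvergenceGroundstate} one has $E_L(0)\to E_0<\tfrac{\hbar^2\pi^2}{m_ed^2}$, so along any sequence $\mu_L\le E_L(0)$ the spectral gap guarantees $E_L(n)-\mu_L\ge\tfrac{\hbar^2\pi^2}{m_ed^2}-E_0>0$ uniformly in $n\ge1$ and in large $L$. Passing to the limit in~\eqref{ParticleDensityExcitedStates} with $\mu$ set to $E_0$ (its largest possible limiting value), I would define
\begin{equation}
\rho_{crit}(\beta):=\rho^{ex}_{\infty}(\beta,E_0)=\frac{\sqrt{4m_e}}{\hbar\pi}\sum_{n=1}^{\infty}\int_{0}^{\infty}\frac{\ud x}{e^{\beta\frac{\hbar^2\pi^2}{m_ed^2}n^2}e^{\beta(x^2-E_0)}-1}\ ,
\end{equation}
and verify that the series-integral converges: the Gaussian decay of $e^{-\beta x^2}$ makes each integral finite, while the factor $e^{-\beta\frac{\hbar^2\pi^2}{m_ed^2}n^2}$ gives summability in $n$, and crucially the denominator never degenerates because $e^{\beta(x^2-E_0)}e^{\beta\frac{\hbar^2\pi^2}{m_ed^2}}>1$ by the gap. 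Thus $\rho_{crit}(\beta)\in(0,\infty)$.

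Next, assuming $\rho>\rho_{crit}(\beta)$, I would argue by contradiction that the condensate density is strictly positive. Since $\rho_{ex}(\beta,\mu_L,L)\le\rho_{ex}(\beta,E_L(0),L)$ and the right-hand side converges to $\rho^{ex}_\infty(\beta,E_0)=\rho_{crit}(\beta)$ as $L\to\infty$, one gets $\limsup_{L\to\infty}\rho_{ex}(\beta,\mu_L,L)\le\rho_{crit}(\beta)<\rho$. Consequently the ground-state term $\tfrac{1}{L}(e^{\beta(E_L(0)-\mu_L)}-1)^{-1}=\rho-\rho_{ex}(\beta,\mu_L,L)$ satisfies $\liminf_{L\to\infty}\tfrac{1}{L}(e^{\beta(E_L(0)-\mu_L)}-1)^{-1}\ge\rho-\rho_{crit}(\beta)>0$, which is precisely macroscopic occupation of the state $\varphi_L(0)$ in the sense defined just above the theorem; it also forces $\mu_L\to E_0$ from below.

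The main obstacle I anticipate is the rigorous justification of the convergence $\rho_{ex}(\beta,\mu_L,L)\to\rho^{ex}_\infty(\beta,\mu)$ uniformly enough in the sequence $\mu_L$, since $\mu_L$ is itself $L$-dependent and approaches the ground-state energy. This requires controlling the Riemann-sum approximation in~\eqref{ParticleDensityExcitedStates} together with a bracketing estimate (as in the proof of Theorem~\ref{ExistenceOneEigenvalue}) that sandwiches the true eigenvalues $E_L(n)$ between Dirichlet and Neumann comparison problems on rectangles, ensuring the sum over $n\ge1$ can be exchanged with the limit. A dominated-convergence argument, using the uniform gap bound $E_L(n)-\mu_L\ge\tfrac{\hbar^2\pi^2}{m_ed^2}-E_0$ to produce an $L$-independent summable majorant, should close this gap and complete the proof.
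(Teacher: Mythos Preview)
Your proposal is correct and follows essentially the same route as the paper: split off the ground-state term, bound the excited-state density using the uniform gap $E_L(n)\ge\frac{\hbar^2\pi^2}{m_ed^2}$ for $n\ge1$, and conclude that for $\rho$ larger than the finite limiting excited-state density the ground state must be macroscopically occupied. The only noteworthy difference is that the paper sidesteps the obstacle you flag (convergence of $\rho_{ex}(\beta,\mu_L,L)$ with an $L$-dependent chemical potential) by fixing, once and for all, some $\mu\in\bigl(E_0,\tfrac{\hbar^2\pi^2}{m_ed^2}\bigr)$ and using $\mu_L\le E_L(0)<\mu$ for large $L$; then one only needs the limit~\eqref{ParticleDensityExcitedStates} at a \emph{constant} $\mu$, avoiding any uniformity or dominated-convergence argument in the chemical potential.
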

\begin{proof} Since $\mu_L \leq E_L(0)$ and $E_L(0) \rightarrow E_0=\frac{\hbar^2 \pi^2}{m_ed^2}-\Delta(d)$ as $L \rightarrow \infty$ one concludes that, for any $\varepsilon_1 > 0$,  $\mu_L < E_0 +\varepsilon_1$ for all $L$ large enough. Accordingly we have that $\mu_L < \frac{\hbar^2\pi^2}{m_ed^2}-\varepsilon_2:=\mu$ for some small $\varepsilon_2 > 0$ and $L$ large enough. This implies
	\begin{equation}\label{EquationProofBECPairs}\begin{split}
	\frac{1}{L}\sum_{n=1}^{\infty}\frac{1}{e^{\beta\left(E_L(n)-\mu_L\right)}-1} &\leq \frac{1}{L}\sum_{n=1}^{\infty}\frac{1}{e^{\beta\left(E_L(n)-\mu\right)}-1}\\
	&\leq \rho_{ex}(\beta,\mu) + \varepsilon_3(L)\ ,
	\end{split}
	\end{equation}
where $\varepsilon_3(L) \rightarrow 0$ as $L \rightarrow \infty$ by~\eqref{ParticleDensityExcitedStates}. Now, since $\varepsilon_3$ and $\mu$ are independent of the pair density $\rho> 0$ (which only affects the sequence $\mu_L$) we conclude, comparing~\eqref{EquationProofBECPairs} and~\eqref{EquationThermodynamicLimit}, that 
\begin{equation}
\limsup_{L \rightarrow \infty}\frac{1}{e^{\beta\left(E_L(0)-\mu_L\right)}-1} > 0 
\end{equation}
for a large enough pair density $\rho > 0$, therefore proving the statement.
\end{proof}
\begin{remark} Note that Theorem~\ref{CondensationPairs} can be reformulated in terms of a critical temperature rather than a critical density, see for example \cite{RuelleSM,BolteKernerBEC}.
\end{remark}
\subsection{On the effect of disorder on the condensate of pairs}
In this final subsection we want to investigate the effect of disorder in the metal on the condensate of pairs which exists in the free gas according to Theorem~\ref{CondensationPairs}. 

From a physical perspective it is reasonable to assume that local impurities in the metal lead to a modified interacting between the electron forming a bound pair (Cooper pair). In particular, if the strength of the  electron-phonon-electron interaction is decreased, the Coulomb repulsion leads to an effective repulsion between the two electrons. To model such an additional (spatially localised) two-particle interaction we follow \cite{KM16,KM16Rep}. More explicitly, the pair Hamiltonian $H^L_{p,int}$ with interaction shall be given by
\begin{equation}\label{HamiltonianCooperPairSingularInteraction}
H^L_{p,int}=-\frac{\hbar^2}{2m_e}\left(\frac{\partial^2}{\partial x^2}+\frac{\partial^2}{\partial y^2}\right)+v_{bin}(|x-y|)+v(x,y)[\delta(x)+\delta(y)]\ ,
\end{equation}
with $v:\mathbb{R}^2_+ \rightarrow \mathbb{R}_+$ being a symmetric potential and $\delta(\cdot)$ the Dirac-Delta distribution. Since the ``support'' of the $\delta(\cdot)$ is concentrated on $x=0$, the (repulsive) two-particle interactions are localised at the origin on the half-line. Of course, one could also consider shifted potentials with $\delta(\cdot-a)$, $a > 0$, but since the effect of such interactions is really important only for localised pair states we restrict attention to the case where $a=0$.

The quadratic form associated with $H^L_{p,int}$ is given by 
\begin{equation}\label{QFInteraction}
q^{int}_L[\varphi]=\frac{\hbar^2}{2m_e}\int_{\Omega_L}|\nabla \varphi|^2\ \ud x + \int_{\partial \Omega_{\sigma}} \sigma(y) |\varphi|_{\partial \Omega_{\sigma}}|^2\ \ud y
\end{equation}
with domain $\cD_{q_L}=\{\varphi \in H^1(\Omega_L)\ |\  \varphi(x,y)=-\varphi(y,x)\ \text{and}\ \varphi|_{\partial \Omega^L_D}=0  \}$ and $\sigma(y):=v(0,y)$.

According to [Theorem~1,\cite{KM16Rep}], the quadratic form \eqref{QFInteraction} is well-defined for potentials $\sigma \in L^{\infty}(0,d)$. Since $\Omega_L$ is bounded, the spectrum of the operator $H^L_{p,int}$ is purely discrete. Again, its eigenvalues shall be denoted by $E_L(n)$, $n \in \mathbb{N}_0$, with and the corresponding eigenstates by $\varphi_L(n)$.

The important effect of the singular interactions then is as follows: Given $\|\sigma\|_{L^{\infty}(0,d)}$ is large enough, the discrete part of the spectrum of $H^{L=\infty}_{p,int}$ (i.e. the corresponding Hamiltonian on the full domain $\Omega$ without anti-symmetry) becomes trivial [Theorem~4,\cite{KM16Rep}]. In analogy to this result we establish the following statement.
\begin{lemma}\label{LemmaConstant} There exists a constant $\gamma > 0$ such that $H^L_{p,int}$ has no eigenvalue smaller than $\frac{\hbar^2 \pi^2}{m_ed^2}$ whenever $\gamma \leq \|\sigma\|_{L^{\infty}(0,d)}$.
\end{lemma}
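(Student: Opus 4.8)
The plan is to show that the repulsive boundary term in the quadratic form~\eqref{QFInteraction} can be made large enough to push the entire spectrum above the threshold $\frac{\hbar^2 \pi^2}{m_ed^2}$. The key observation is that $\frac{\hbar^2 \pi^2}{m_ed^2}$ is precisely the bottom of the essential spectrum established in Theorem~\ref{TheoremEssentialSpectrum}, and that without the interaction the only part of the spectrum below this threshold is the single ground state $E_0$ (Theorem~\ref{ExistenceOneEigenvalue}). The interaction term only affects functions that do not vanish on $\partial \Omega_{\sigma}$, so the strategy is to show that any normalised $\varphi \in \cD_{q_L}$ with $q_L[\varphi] < \frac{\hbar^2 \pi^2}{m_ed^2}$ must have a nontrivial trace on $\partial\Omega_{\sigma}$, and then exploit the fact that the boundary term grows linearly in $\|\sigma\|_{L^\infty}$ while the kinetic part stays bounded.

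More concretely, I would argue by contradiction via the variational characterisation $\inf\sigma(H^L_{p,int}) = \inf_{\varphi} q^{int}_L[\varphi]/\|\varphi\|^2$. Suppose for some arbitrarily large coupling there still existed a normalised eigenfunction $\varphi$ with $q^{int}_L[\varphi] < \frac{\hbar^2 \pi^2}{m_ed^2}$. Since $\sigma \geq 0$, the boundary term is nonnegative, so in particular the kinetic part satisfies $\frac{\hbar^2}{2m_e}\int_{\Omega_L}|\nabla\varphi|^2 < \frac{\hbar^2 \pi^2}{m_ed^2}$; this is exactly the regime treated in the lower-bound argument of Theorem~\ref{GroundStateEnergyCP}, which forces $\varphi$ to be close to the ground-state profile and, crucially, to have a uniformly controlled $H^1$-norm. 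The main point is then a trace estimate: for functions whose kinetic energy is bounded below the threshold, the trace $\varphi|_{\partial\Omega_\sigma}$ cannot be made arbitrarily small. I would establish a lower bound of the form $\int_{\partial\Omega_\sigma}|\varphi|^2\,\ud y \geq c > 0$ with $c$ depending only on $d$ (and the bound on the kinetic energy), using that a function which is small on $\partial\Omega_\sigma$ together with the Dirichlet conditions on $\partial\Omega^L_D$ would be nearly fully constrained and hence have kinetic energy above the threshold.

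Granting such a trace lower bound, the conclusion is immediate: if $\|\sigma\|_{L^\infty(0,d)} \geq \gamma$ then
\begin{equation}
q^{int}_L[\varphi] \geq \frac{\hbar^2}{2m_e}\int_{\Omega_L}|\nabla\varphi|^2\,\ud x + \gamma\, c \geq \gamma\, c,
\end{equation}
so choosing $\gamma$ with $\gamma\, c \geq \frac{\hbar^2 \pi^2}{m_ed^2}$ contradicts the assumed strict inequality. Since this bound is uniform in $L$, the constant $\gamma$ depends only on $d$ (through $c$), giving the stated $L$-independent threshold. I would also note that the analogue on the full domain $\Omega$ is exactly [Theorem~4,\cite{KM16Rep}], so the role of the present lemma is to transfer that result to the antisymmetric, finite-box setting, which is why the restriction argument of Theorem~\ref{ExistenceOneEigenvalue} to the relevant triangle and the reflection to $\hat\Omega$ remain available.

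The main obstacle is the trace lower bound $\int_{\partial\Omega_\sigma}|\varphi|^2\,\ud y \geq c$. It is not enough to invoke a generic trace inequality, since those give upper bounds on the trace; what is needed is that the trace cannot degenerate to zero for the low-energy functions. This requires genuinely using the geometry of $\Omega_L$ near the origin together with the spectral gap: one must rule out the possibility that a low-kinetic-energy antisymmetric function concentrates away from the segment $\partial\Omega_\sigma$. The cleanest route is probably to show that the problem with an added Dirichlet condition on $\partial\Omega_\sigma$ (the $\gamma\to\infty$ limit) has its ground state energy at or above $\frac{\hbar^2 \pi^2}{m_ed^2}$ — which is plausible because imposing Dirichlet data on all of $\partial\Omega_\sigma$ recovers a strip-type geometry whose transverse mode sits exactly at the threshold — and then use a compactness/continuity argument to promote this limiting statement to a finite but sufficiently large $\gamma$.
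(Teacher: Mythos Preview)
The paper's own proof is a one-line reference to [Theorem~4,\cite{KM16Rep}], so there is no detailed argument to compare against; your variational strategy is almost certainly the same as, or very close to, what that reference does. Two points nonetheless deserve attention.

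First, your displayed inequality
\[
q^{int}_L[\varphi] \;\geq\; \frac{\hbar^2}{2m_e}\int_{\Omega_L}|\nabla\varphi|^2\,\ud x + \gamma\, c
\]
does not follow from the hypothesis $\|\sigma\|_{L^\infty(0,d)}\geq\gamma$. A large $L^\infty$ norm gives no pointwise \emph{lower} bound on $\sigma$, so $\int_{\partial\Omega_\sigma}\sigma|\varphi|^2\,\ud y$ need not dominate $\gamma\int_{\partial\Omega_\sigma}|\varphi|^2\,\ud y$. (Take $\sigma=\gamma\chi_{[d-\varepsilon,d]}$ with $\varepsilon$ small.) This is arguably an imprecision already in the lemma's statement --- the natural hypothesis is $\inf\sigma\geq\gamma$, which is presumably what \cite{KM16Rep} assumes --- but as written your step is a genuine gap.

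Second, your route to the trace lower bound via ``compactness/continuity'' is correct in spirit but, as you present it, would produce an $L$-dependent constant: Rellich compactness is available only on the bounded domain $\Omega_L$, and the resulting $c=c(L)$ need not be uniform. The clean way to obtain the $L$-independent statement is to bypass the trace bound entirely and argue by domain monotonicity: any $\varphi\in\cD_{q_L}$ extends by zero to a function in the form domain of the infinite-volume operator on $\Omega$, with the same form value, so $\inf\sigma(H^L_{p,int})\geq\inf\sigma(H^{L=\infty}_{p,int})$. The right-hand side is exactly what [Theorem~4,\cite{KM16Rep}] controls, and this is presumably what the paper means by ``follows the lines of'' that proof. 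Your observation that the $\gamma\to\infty$ limit corresponds to Dirichlet data on $\partial\Omega_\sigma$, yielding a strip whose transverse ground state sits at $\frac{\hbar^2\pi^2}{m_ed^2}$, is correct and is the geometric heart of that referenced argument.
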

\begin{proof} The proof follows the lines of the proof of [Theorem~4,\cite{KM16Rep}].
\end{proof}
On the other hand, since the additional interaction term in \eqref{HamiltonianCooperPairSingularInteraction} affects only the lower triangular part of the domain $\Omega_L$, one readily obtains (again using a bracketing argument similar to the one used in the proof of Theorem~\ref{ExistenceOneEigenvalue}) the same formula \eqref{ParticleDensityExcitedStates} for the pair density in the excited states. However, since there exists no eigenvalue smaller than $\frac{\hbar^2 \pi^2}{m_ed^2}$ if $\|\sigma\|_{L^{\infty}(0,d)}$ is large enough, we can prove the following result.
\begin{theorem}[Absence of pair condensation]\label{AbsenceCondensation} Assume that $\sigma \in L^{\infty}(0,d)$ is such that $\gamma \leq \|\sigma\|_{L^{\infty}(0,d)}$ with $\gamma > 0$ as described in Lemma~\ref{LemmaConstant}. Then, for any inverse temperature $\beta > 0$ and all pair densities $\rho> 0$ there exists no condensation of pairs in the thermodynamic limit.
\end{theorem}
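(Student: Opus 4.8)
The statement is the mirror image of Theorem~\ref{CondensationPairs}: strong disorder removes, via Lemma~\ref{LemmaConstant}, the isolated bound state that was driving the condensation, and the plan is to show that without it the continuum can absorb an arbitrarily large density. Write $E_\ast:=\frac{\hbar^2\pi^2}{m_ed^2}$. By Lemma~\ref{LemmaConstant}, once $\gamma\leq\|\sigma\|_{L^{\infty}(0,d)}$ we have $E_L(n)\geq E_\ast$ for all $n\in\mathbb{N}_0$ and all $L>d$, so there is no level sitting a fixed distance below the bottom of the continuum. The strategy is to prove that the limiting excited-state density diverges as the chemical potential is pushed up to $E_\ast$; this forces $\mu_L$ to stay strictly below $E_\ast$ for every prescribed $\rho$, whence no eigenstate can be macroscopically occupied (understood, as in Theorem~\ref{CondensationPairs}, as a nonvanishing occupation per unit length $\tfrac1L(e^{\beta(E_L(n)-\mu_L)}-1)^{-1}$).

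First I would record the monotonicity structure. For fixed $L$ the total density $\rho_L(\mu):=\frac1L\sum_{n\geq0}(e^{\beta(E_L(n)-\mu)}-1)^{-1}$ is continuous and strictly increasing on $(-\infty,E_L(0))$, running from $0$ to $+\infty$, so $\mu_L$ is the unique solution of $\rho_L(\mu_L)=\rho$; moreover $\rho_L(\mu)\geq\rho_{ex}(\beta,\mu,L)$, where $\rho_{ex}(\beta,\mu,L)=\frac1L\sum_{n\geq1}(e^{\beta(E_L(n)-\mu)}-1)^{-1}$. The bracketing argument indicated before the theorem (identical in spirit to the proof of Theorem~\ref{ExistenceOneEigenvalue}, the localized interaction only affecting the lower triangle near the origin) yields, for each fixed $\mu<E_\ast$, the convergence $\rho_{ex}(\beta,\mu,L)\to\rho^{ex}_\infty(\beta,\mu)$ of~\eqref{ParticleDensityExcitedStates}, together with the fact that $\rho^{ex}_\infty(\beta,\cdot)$ is continuous and increasing on $(-\infty,E_\ast)$.

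The decisive step, and the one I expect to require the most care, is the infrared divergence $\rho^{ex}_\infty(\beta,\mu)\to+\infty$ as $\mu\uparrow E_\ast$. It comes from the lowest band $n=1$ in~\eqref{ParticleDensityExcitedStates}: with $a:=E_\ast-\mu>0$ its contribution is $\int_0^\infty(e^{\beta(x^2+a)}-1)^{-1}\,\ud x$, and using $e^t-1\leq t\,e^t$ one bounds the integrand below by $e^{-\beta(x^2+a)}(\beta(x^2+a))^{-1}$; restricting to $x\in(0,1)$ gives a lower bound of order $\frac{e^{-\beta}}{\beta}\,a^{-1/2}\arctan(a^{-1/2})$, which tends to $+\infty$ as $a\to0^+$. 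The higher bands remain finite and the $n$-sum converges because of the Gaussian suppression $e^{-\beta E_\ast n^2}$, so the divergence is genuine and located entirely in the $n=1$ term. This quasi-one-dimensional behaviour (dispersion $\sim x^2$ with nonvanishing density of states at the band edge) is precisely what distinguishes the present geometry from the three-dimensional free Bose gas; the technical point to be justified carefully is the interchange of the limit $\mu\uparrow E_\ast$ with the integral, which a monotone-convergence argument applied to the explicit lower bound settles.

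Granting the divergence, the conclusion is immediate. Given $\beta$ and $\rho$, choose by continuity and monotonicity a value $\mu^\ast<E_\ast$ with $\rho^{ex}_\infty(\beta,\mu^\ast)>\rho$. Since $\rho_{ex}(\beta,\mu^\ast,L)\to\rho^{ex}_\infty(\beta,\mu^\ast)>\rho$, we get $\rho_L(\mu^\ast)\geq\rho_{ex}(\beta,\mu^\ast,L)>\rho=\rho_L(\mu_L)$ for all large $L$, and monotonicity of $\rho_L$ forces $\mu_L<\mu^\ast<E_\ast$. Setting $\eta:=E_\ast-\mu^\ast>0$, every level obeys $E_L(n)-\mu_L>\eta$, hence $\tfrac1L(e^{\beta(E_L(n)-\mu_L)}-1)^{-1}\leq\tfrac1L(e^{\beta\eta}-1)^{-1}\to0$, so no eigenstate is macroscopically occupied. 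The contrast with Theorem~\ref{CondensationPairs} is the whole point: there the bound state $E_0<E_\ast$ capped $\mu_L$ strictly below $E_\ast$ and kept $\rho^{ex}_\infty(\beta,\mu)$ bounded, forcing the surplus into the ground state; here that cap is removed, the continuum never saturates, and condensation is destroyed for every $\rho>0$.
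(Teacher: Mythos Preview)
Your argument is correct and shares the same core with the paper's proof: both hinge on the divergence $\rho^{ex}_\infty(\beta,\mu)\to\infty$ as $\mu\uparrow E_\ast$, which forces the chemical potential to stay a fixed distance below $E_\ast$ and hence kills any macroscopic occupation. The only organisational difference is that the paper argues by contradiction (assuming condensation, deducing $\mu_L\to E_\ast$, and then overshooting the prescribed density), whereas you run the logic directly and, in addition, spell out the infrared divergence via the explicit $n=1$ lower bound that the paper merely asserts from~\eqref{ParticleDensityExcitedStates}.
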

\begin{proof} We prove the statement by contradiction and assume therefore that the ground state $\varphi_L(0)$ of the Hamiltonian $H^L_{p,int}$ associated to the eigenvalue $E_L(0) \geq \frac{\hbar^2 \pi^2}{m_ed^2}$ is macroscopically occupied in the thermodynamic limit (note here that the ground state is always occupied the most).
	
	Now, this implies the existence of a subsequence of chemical potentials, which we shall also denote by $\mu_L$ for simplicity, for which $\mu_L \rightarrow \frac{\hbar^2 \pi^2}{m_ed^2}$ as $L \rightarrow \infty$ since $E_L(0) \rightarrow \frac{\hbar^2 \pi^2}{m_ed^2}$ as $L \rightarrow \infty$. Note that this can be proved similar to Lemma~\ref{LemmaConvergenceGroundstate}. However, for an arbitrary and sufficiently small $\varepsilon > 0$ we obtain 
	\begin{equation}\label{EquationProofInteractionCondensation}
	\frac{1}{L}\sum_{n=1}^{\infty}\frac{1}{e^{\beta\left(E_L(n)-\mu_L\right)}} \geq \rho^{ex}_{\infty}\left(\beta,\mu=\frac{\hbar^2 \pi^2}{m_ed^2}-\varepsilon\right) -\varepsilon_4(L)
	\end{equation}
	with $\varepsilon_4(L) \rightarrow 0$ as $L \rightarrow \infty$ by eq.~\eqref{ParticleDensityExcitedStates}.
	
	Now, due to eq.~\eqref{EquationThermodynamicLimit}, the left-hand side of eq.~\eqref{EquationProofInteractionCondensation} is bounded by the pair density $\rho > 0$. However, since 
	\begin{equation}
	\lim_{\varepsilon \rightarrow 0}\rho^{ex}_{\infty}\left(\beta,\mu=\frac{\hbar^2 \pi^2}{m_ed^2}-\varepsilon\right)=\infty
	\end{equation}
	by eq.~\eqref{ParticleDensityExcitedStates}, we can make the right-hand side of eq.~\eqref{EquationProofInteractionCondensation} arbitrarily large by first choosing $\varepsilon$ small and $L$ subsequently large enough. This, however, is a contradiction hence proving the statement. 
\end{proof}

\vspace*{0.5cm}


\vspace*{0.5cm}

{\small
\bibliographystyle{amsalpha}
\bibliography{Literature}}

\end{document}